\documentclass[a4paper,fleqn]{cas-sc}

\usepackage[authoryear,longnamesfirst]{natbib}
\usepackage{amsthm}
\usepackage{algorithm}
\usepackage{algpseudocode}

\newtheorem{thm}{Theorem}
\newtheorem{cor}[thm]{Corollary}
\newtheorem{prop}[thm]{Proposition}

\begin{document}
\let\WriteBookmarks\relax
\def\floatpagepagefraction{1}
\def\textpagefraction{.001}

\shorttitle{Gibbs Sampling for Bayesian GPMF}

\shortauthors{Sakaori and Abe}

\title [mode = title]{Gibbs Sampling for Bayesian Generalized Poisson Matrix Factorization}  



%

\author[1]{Fumitake Sakaori}[
    orcid=0000-0002-4383-4959
]

\cormark[1]


\ead{sfumitake001e@g.chuo-u.ac.jp}


\credit{Conceptualization, Methodology, Software, Formal analysis, Visualization, Writing - original draft}

\affiliation[1]{organization={Organization for the Establishment of the Faculty of Sports Data Science and Business, Chuo University},
            addressline={742-1 Higashinakano, Hachioji-shi}, 
            city={Tokyo},
            postcode={192-0393}, 
            country={Japan}}

\author[2]{Hiroyasu Abe}[
    orcid=0009-0004-1526-7143
]


\ead{hiro00012@gmail.com}


\credit{Conceptualization, Writing - review \& editing}

\affiliation[2]{organization={School of Pharmaceutical Sciences, Wakayama Medical University},
            addressline={25-1 Shichibancho, Wakayama-shi}, 
            city={Wakayama},
            postcode={640-8156}, 
            country={Japan}}

\cortext[1]{Corresponding author}



\begin{abstract}
Generalized Poisson matrix factorization (GPMF) is a matrix factorization method for count data with overdispersion based on the generalized Poisson distribution. While GPMF provides point estimates of the model parameters through maximum likelihood estimation, it does not quantify estimation uncertainty. In this paper, we propose a Bayesian extension of GPMF, referred to as Bayesian GPMF, and develop a Gibbs sampler for posterior inference. The proposed method is based on a compound Poisson representation of the generalized Poisson distribution, which introduces latent variables and yields closed-form full conditional posterior distributions for all model parameters. To efficiently sample the overdispersion parameter, we derive an infinite mixture representation of the exponentially tilted beta (EBeta) distribution and develop a finite approximation based on this representation. Simulation studies demonstrate that the proposed Bayesian GPMF achieves smaller mean squared errors than the conventional GPMF while providing credible intervals with empirical coverage probabilities close to the nominal level. Furthermore, the proposed finite approximation attains estimation accuracy comparable to Sampling/Importance Resampling (SIR) while requiring less computation. An application to football event data further illustrates the usefulness of Bayesian GPMF for extracting interpretable latent structures and quantifying estimation uncertainty. These results demonstrate that the proposed framework provides an effective Bayesian approach to generalized Poisson matrix factorization.
\end{abstract}



\begin{keywords}
Generalized Poisson distribution \sep Matrix factorization \sep Bayesian inference \sep Gibbs sampler \sep Compound Poisson distribution \sep Infinite mixture representation
\end{keywords}

\maketitle

\section{Introduction}\label{sec:introduction}

Non-negative Matrix Factorization (NMF) is a matrix factorization technique that approximates a non-negative data matrix by the product of two non-negative matrices. Owing to the non-negativity constraint, the resulting decomposition provides an additive representation, making NMF highly interpretable. Consequently, NMF has been widely applied in various fields such as image processing, text mining, and recommender systems.

NMF can be formulated as an optimization problem that minimizes a divergence between an observed data matrix and its low-rank approximation. For count data, \citet{Lee1999} proposed an NMF model based on the Kullback--Leibler (KL) divergence. This formulation is equivalent to maximum likelihood estimation under the assumption that each observation independently follows a Poisson distribution, thereby providing a probabilistic interpretation of NMF.

The Poisson distribution assumes equality of the mean and variance, making it unsuitable for modeling overdispersed count data frequently encountered in practice. To address this issue, \citet{Consul01111973} introduced the Generalized Poisson Distribution (GPD), which is capable of modeling both overdispersion and underdispersion. Compared with the negative binomial distribution, the generalized Poisson distribution provides an alternative mechanism for modeling overdispersion, while exhibiting less zero inflation and heavier tails \citep{Joe2005}. Based on this distribution, \citet{Ohashi2026} proposed Generalized Poisson Matrix Factorization (GPMF), which extends Poisson matrix factorization by incorporating an additional dispersion parameter. As a result, GPMF provides greater flexibility for modeling count data than conventional Poisson matrix factorization.

In this paper, we extend GPMF to a Bayesian framework and develop a Gibbs sampling algorithm for posterior inference. By introducing latent variables based on a compound Poisson representation of the generalized Poisson distribution, all full conditional distributions can be derived in closed form, leading to an efficient Gibbs sampler. Bayesian inference also enables uncertainty quantification through posterior distributions of the model parameters.

The remainder of this paper is organized as follows. Section \ref{sec:background} reviews GPMF. Section \ref{sec:bgpmf} presents the proposed Bayesian GPMF model and derives the Gibbs sampling algorithm. Section \ref{sec:simulation} reports simulation studies, and Section \ref{sec:example} demonstrates the proposed method through an analysis of real data.

\section{Background}\label{sec:background}

Generalized Poisson Matrix Factorization (GPMF) combines nonnegative matrix factorization with the generalized Poisson distribution. This section first reviews these two components and then introduces GPMF.

\subsection{Nonnegative Matrix Factorization}\label{sec:nmf}

Nonnegative Matrix Factorization (NMF) is a technique for extracting latent structures from nonnegative data by approximating an observed nonnegative matrix with the product of two low-rank nonnegative matrices. Let $Y\in\mathbb{R}_{\ge0}^{I\times J}$ denote an observed data matrix. NMF approximates $Y$ by
\[
Y\simeq S=WH^\top,
\]
where $W\in\mathbb{R}_{\ge0}^{I\times K}$ and
$H\in\mathbb{R}_{\ge0}^{J\times K}$ are called the basis matrix and the coefficient matrix, respectively. Each element of $Y$ is approximated by
\[
y_{ij}\simeq s_{ij}
=
\sum_{k=1}^K w_{ik}h_{jk},
\]
where $K\ll\min(I,J)$ denotes the number of latent factors.

NMF can be interpreted within the framework of maximum likelihood estimation by assuming that each observation independently follows a probability distribution
\[
y_{ij}\sim F(s_{ij},\theta),
\]
where $F(s_{ij},\theta)$ denotes a probability distribution with mean $s_{ij}$ and nuisance parameter $\theta$. Under this formulation, minimizing the negative log-likelihood is, in many cases, equivalent to minimizing a divergence between the observed matrix and its low-rank approximation.

One of the most widely used models is Poisson Matrix Factorization (PMF) \citep{Lee1999}, which assumes
\[
y_{ij}\sim\mathrm{Poisson}(s_{ij}).
\]
Maximum likelihood estimation under this model is equivalent to minimizing the Kullback--Leibler divergence between the observed matrix $Y$ and its approximation $WH^\top$. However, because the Poisson distribution assumes equality of the mean and variance, it is not suitable for modeling overdispersed count data. Various extensions have therefore been proposed to overcome this limitation.

One such extension is Negative Binomial Matrix Factorization (NBMF) \citep{Gouvert2018,Gouvert2020}, which assumes
\[
y_{ij}
\sim
\mathrm{NB}
\left(
\alpha,
\frac{s_{ij}}{s_{ij}+\alpha}
\right).
\]
Since the negative binomial distribution can be represented as a gamma mixture of Poisson distributions, it naturally accommodates overdispersion. However, it has been pointed out that the distribution tends to produce excessive zeros when the degree of overdispersion is large \citep{MINAMI2007210}. In addition, estimating the dispersion parameter $\alpha$ is not straightforward, and it is often fixed in advance or selected from a set of candidate values.

\subsection{Generalized Poisson Distribution}\label{sec:gpd}

While NBMF models overdispersed count data using the negative binomial distribution, \citet{Ohashi2026} proposed a matrix factorization model based on the generalized Poisson distribution instead. We therefore briefly review the generalized Poisson distribution before introducing GPMF.

To relax the restrictive assumption of equal mean and variance in the Poisson distribution, \citet{Consul01111973} introduced the Generalized Poisson Distribution (GPD). A random variable following the generalized Poisson distribution, denoted by $\mathrm{GP}(\eta,\xi)$, has probability mass function
\begin{align}
p(y)=
\begin{cases}
\dfrac{\eta(\eta+\xi y)^{y-1}\exp\{-(\eta+\xi y)\}}{y!},
& y\ge0,\\
0,
& \mbox{for } y>m \mbox{ if } \xi<0.
\end{cases}
\end{align}
The parameter space is given by
\[
\eta>0,\qquad
\max(-1,-\eta/m)<\xi<1,
\]
where, for $\xi<0$, $m~(\ge4)$ denotes the largest integer satisfying
$\eta+\xi m>0$.
Since the probability mass function is defined as zero for all
$y>m$ when $\xi<0$, the total probability becomes less than one unless the distribution is treated as a truncated generalized Poisson distribution. In this paper, we therefore restrict our attention to the case $0<\xi<1$, where the generalized Poisson distribution serves as an overdispersion model.
Thus,
\[
V(Y)>E(Y)
\]
holds for $\xi>0$, implying that the generalized Poisson distribution naturally accommodates overdispersion.

Although both the negative binomial and generalized Poisson distributions can model overdispersion, their distributional characteristics differ. \citet{MINAMI2007210} pointed out that the negative binomial distribution tends to generate excessive zeros under strong overdispersion. Moreover, under matched mean and variance, \citet{Joe2005} showed that the negative binomial distribution exhibits greater zero inflation, whereas \citet{Ohashi2026} demonstrated that the generalized Poisson distribution has higher kurtosis, resulting in heavier tails.

\subsection{Generalized Poisson Matrix Factorization}\label{sec:gpmf}

Using the generalized Poisson distribution described in Section \ref{sec:gpd}, \citet{Ohashi2026} proposed Generalized Poisson Matrix Factorization (GPMF).

Let $Y\in\mathbb{R}_{\ge0}^{I\times J}$ denote an observed matrix. GPMF approximates $Y$ by the product of a basis matrix
$W\in\mathbb{R}_{\ge0}^{I\times K}$ and a coefficient matrix
$H\in\mathbb{R}_{\ge0}^{J\times K}$, namely
$Y\simeq WH^\top$, by assuming
\[
y_{ij}
\sim
\mathrm{GP}
\left(
\frac{s_{ij}}{1+\theta_i},
\frac{\theta_i}{1+\theta_i}
\right),
\]
where
\[
s_{ij}
=
\sum_{k=1}^K
w_{ik}h_{jk}.
\]
The corresponding mean and variance are
$s_{ij}$ and
$s_{ij}(1+\theta_i)^2$, respectively.

The corresponding negative log-likelihood is given by
\begin{align*}
-\ell(Y|W,H,\boldsymbol{\theta}) 
  &= - \sum_{i,j: y_{ij}\geq 1} \left\{
      \log s_{ij} + (y_{ij}-1)\log (s_{ij} +\theta_i y_{ij})
    \right\}\\
    & \hspace*{2em} +\sum_{i=1}^I \sum_{j=1}^J y_{ij} \log(1+\theta_i)
       +\sum_{i=1}^I \sum_{j=1}^J \frac{s_{ij}+\theta_i y_{ij}}{1+\theta_i}
       +\mathrm{const.},
\end{align*}
where
$\boldsymbol{\theta}
=
(\theta_1,\ldots,\theta_I)^\top$.

\citet{Ohashi2026} proposed an MM algorithm based on an auxiliary function for iteratively minimizing this objective function. Although the resulting algorithm provides point estimates of the model parameters, it does not quantify the uncertainty associated with the estimation.

\section{Bayesian GPMF}\label{sec:bgpmf}

GPMF has been formulated within the framework of maximum likelihood estimation, which provides point estimates of the model parameters. To quantify the uncertainty associated with parameter estimation, we extend GPMF to a Bayesian framework, in which inference is based on posterior distributions. Specifically, we formulate GPMF as a hierarchical Bayesian model and develop an inference method based on Gibbs sampling. This formulation is based on the compound Poisson representation of the generalized Poisson distribution. By introducing latent variables through this representation, conjugate prior distributions can be assigned to the elements of the factor matrices, enabling efficient sampling from the full conditional posterior distributions.

\subsection{Compound Poisson Representation of the Generalized Poisson Distribution}\label{sec:compoundpoisson}

The generalized Poisson distribution can be represented as a compound Poisson distribution. \citet{Finner2015} showed the following result.

\begin{thm}\label{thm:borel}
Let $Y_k~(k=1,\ldots,N)$ be independent random variables following the Borel distribution with parameter $\mu\in(0,1]$, and let $N$ be a random variable following the Poisson distribution with parameter $\eta>0$. Then,
\[
Z=\sum_{k=1}^{N}Y_k
\]
follows the generalized Poisson distribution $\mathrm{GP}(\eta,\mu)$.
\end{thm}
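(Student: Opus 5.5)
We prove the statement with probability generating functions, using the Lagrange inversion characterization of the Borel law. Recall that the Borel distribution with parameter $\mu\in(0,1]$ has mass function
\[
P(Y=y)=\frac{e^{-\mu y}(\mu y)^{y-1}}{y!},\qquad y=1,2,\ldots,
\]
and let $B(s)=E[s^{Y}]$ be its probability generating function for $|s|\le 1$. Since $N$ is independent of the i.i.d. $Y_k$, conditioning on $N$ gives
\[
E[s^Z]=E\bigl[B(s)^N\bigr]=\exp\{\eta(B(s)-1)\}.
\]
It therefore suffices to show that the generating function of $\mathrm{GP}(\eta,\mu)$ equals $\exp\{\eta(B(s)-1)\}$.

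\textbf{Step 1: characterize $B$ as a root.} The Borel law is the total progeny of a Galton--Watson process with $\mathrm{Poisson}(\mu)$ offspring. Hence $B$ satisfies the functional equation $B(s)=s\exp\{\mu(B(s)-1)\}$. One can also verify this directly by Lagrange inversion. Set $u=B(s)$ and $\phi(u)=e^{\mu(u-1)}$, so that $u=s\phi(u)$. The Lagrange formula then gives
\[
[s^y]\,u=\frac1y[u^{y-1}]\phi(u)^y=\frac1y\,\frac{(\mu y)^{y-1}e^{-\mu y}}{(y-1)!},
\]
which is exactly the Borel mass function.

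\textbf{Step 2: expand the target function.} Apply the general Lagrange--B\"urmann formula to $F(u)=e^{\eta(u-1)}$:
\[
[s^y]F(u)=\frac1y[u^{y-1}]F'(u)\phi(u)^y\qquad (y\ge1).
\]
Here $F'(u)\phi(u)^y=\eta\,e^{-\eta-\mu y}e^{(\eta+\mu y)u}$. Its coefficient of $u^{y-1}$ is $\eta e^{-(\eta+\mu y)}(\eta+\mu y)^{y-1}/(y-1)!$. Dividing by $y$ gives
\[
\frac{\eta(\eta+\mu y)^{y-1}e^{-(\eta+\mu y)}}{y!},
\]
which is the $\mathrm{GP}(\eta,\mu)$ mass function. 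For $y=0$ we have $u(0)=0$, so $F(u(0))=e^{-\eta}$, which agrees with the GP mass at zero. Thus the two generating functions agree coefficientwise, and the distributions coincide.

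\textbf{Main obstacle.} The delicate part is justifying the formal power series manipulations analytically. We must check that $B(s)$ is the root of $u=s\phi(u)$ that is analytic near $0$ with $B(0)=0$, which holds since $\phi(0)\ne0$. We must also check that the series converge on $|s|\le1$. Convergence follows because the Borel probabilities sum to $1$ for $\mu\le1$, since the Poisson branching process is critical or subcritical and so extinct almost surely. Hence $B(s)$ is a genuine generating function on the closed unit disk, $|B(s)|\le1$, and the composition $\exp\{\eta(B(s)-1)\}$ is legitimate. Since both sides are power series with nonnegative coefficients that agree formally, equality of the distributions follows without further analytic estimates.
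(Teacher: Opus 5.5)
The paper gives no proof of this theorem; it simply quotes it from \citet{Finner2015}. Your argument therefore has to stand on its own, and it does.

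You correctly make explicit that $N$ is independent of the $Y_k$, which the statement leaves implicit. With that, $E[s^Z]=\exp\{\eta(B(s)-1)\}$ holds. The Lagrange computation in Step 1 reproduces the Borel mass function. Step 2 yields $\eta(\eta+\mu y)^{y-1}e^{-(\eta+\mu y)}/y!$ for $y\ge1$ and the correct value $e^{-\eta}$ at $y=0$. Your analytic remarks are also adequate. $B$ has nonnegative coefficients and radius of convergence at least one, and $F$ is entire, so the formal composition coincides with the analytic one near $s=0$. The branching-process remark is dispensable, because Step 1 alone identifies $B$ with the formal root of $u=s\phi(u)$ vanishing at $0$.

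A more elementary route is suggested by the Borel--Tanner fact the paper states right after the theorem. Granting that a sum of $n$ i.i.d.\ $\mathrm{Borel}(\mu)$ variables is $\mathrm{BT}(n,\mu)$, conditioning on $N$ and applying the ordinary binomial theorem gives, for $z\ge1$,
\[
P(Z=z)=\frac{e^{-\eta-\mu z}}{z}\sum_{n=1}^{z}\frac{\eta^{n}(\mu z)^{z-n}}{(n-1)!\,(z-n)!}
=\frac{\eta(\eta+\mu z)^{z-1}e^{-(\eta+\mu z)}}{z!}.
\]
This is a finite sum with no analytic questions at all.

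Your Step 2 is in fact the same calculation in generating-function form. Expanding $F(u)=e^{-\eta}\sum_n \eta^n u^n/n!$ and applying $[s^z]u^n=\frac{n}{z}[u^{z-n}]\phi(u)^z$ term by term produces exactly the Borel--Tanner probabilities and then this binomial sum. Your version has the advantage of deriving the Borel--Tanner law along the way rather than assuming it; that law is itself usually proved by Lagrange inversion or a hitting-time argument. The convolution route, once that law is granted, reduces the theorem to a purely combinatorial identity.
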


Here, the probability mass function of the Borel distribution $\mathrm{Borel}(\mu)$ is given by
\[
p(y)=
\dfrac{(\mu y)^{y-1}}{y!}
\exp(-\mu y),
\qquad
y=1,2,\ldots.
\]
Since $N$ follows a Poisson distribution, $Z$ follows a compound Poisson distribution. Thus, the generalized Poisson distribution can be interpreted as a compound Poisson distribution. When $N=0$, we define $Z=0$ with probability one.

Furthermore, the sum of random variables following the Borel distribution,
\[
Z=\sum_{k=1}^{N}Y_k,
\]
follows the Borel--Tanner distribution $\mathrm{BT}(N,\mu)$, whose probability mass function is given by
\[
p(z)
=
\dfrac{N}{z}
\dfrac{(\mu z)^{z-N}}{(z-N)!}
\exp(-\mu z),
\qquad
z=N,N+1,\ldots.
\]
When $N=0$, we define $Z=0$ with probability one. Using this representation, the following corollary follows immediately from Theorem \ref{thm:borel}.

\begin{cor}\label{cor:BT}
Let
\[
N\sim\mathrm{Poisson}(\eta),
\]
and suppose that the conditional distribution of $Y$ given $N$ is
\[
Y\mid N
\sim
\mathrm{BT}(N,\mu).
\]
Then, the marginal distribution of $Y$ is the generalized Poisson distribution
\[
Y\sim\mathrm{GP}(\eta,\mu).
\]
\end{cor}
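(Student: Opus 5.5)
The corollary follows from Theorem \ref{thm:borel} once we identify the conditional law of the compound sum given $N$. The plan has two steps. First, show that for fixed $N=n\ge 1$ the sum $\sum_{k=1}^n Y_k$ of i.i.d.\ $\mathrm{Borel}(\mu)$ variables has the $\mathrm{BT}(n,\mu)$ law, as asserted in the text. Second, observe that the hierarchical model $N\sim\mathrm{Poisson}(\eta)$, $Y\mid N\sim\mathrm{BT}(N,\mu)$ is then equal in joint distribution to $(N,Z)$ with $Z=\sum_{k=1}^N Y_k$, so the marginal of $Y$ equals the marginal of $Z$, which is $\mathrm{GP}(\eta,\mu)$ by Theorem \ref{thm:borel}.

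For the first step, the paper states the Borel--Tanner fact without proof, and I would take it as given. If a verification is wanted, the cleanest route is probability generating functions. The Borel pgf $G(s)=E[s^{Y_1}]$ is the solution of $G(s)=s\,e^{\mu(G(s)-1)}$, which comes from the Galton--Watson total-progeny interpretation with $\mathrm{Poisson}(\mu)$ offspring. By Lagrange inversion, the coefficient of $s^z$ in $G(s)^n$ is
\[
\frac{n}{z}\,[t^{z-n}]\,e^{\mu z(t-1)}=\frac{n}{z}\frac{(\mu z)^{z-n}}{(z-n)!}e^{-\mu z},
\]
which is exactly the $\mathrm{BT}(n,\mu)$ mass function. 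The case $n=0$ is the degenerate law at $0$, matching the convention $Z=0$ in both definitions.

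For the second step, I would write
\[
P(Y=y)=\sum_{n\ge0}P(N=n)\,P(Y=y\mid N=n)=\sum_{n\ge0}P(N=n)\,P\Bigl(\textstyle\sum_{k=1}^n Y_k=y\Bigr)=P(Z=y).
\]
The middle equality uses the first step, and it also uses the independence of $N$ from the $Y_k$, which is implicit in Theorem \ref{thm:borel}. The final expression is the GP mass function by Theorem \ref{thm:borel}.

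The only step with real content is the Borel--Tanner convolution identity. Since the text asserts it, the proof reduces to the conditioning argument above. If the identity were to be proved directly, I would use the Lagrange inversion computation rather than an inductive convolution, because convolving the mass functions directly requires an Abel-type summation identity that is messier.
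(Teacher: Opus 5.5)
Your proposal is correct and matches the paper's argument: the paper also treats the corollary as immediate from Theorem~\ref{thm:borel}, given that $\sum_{k=1}^{N} Y_k$ conditional on $N$ is $\mathrm{BT}(N,\mu)$, and your conditioning computation simply makes that step explicit. Your Lagrange-inversion check of the Borel--Tanner convolution identity is valid and goes beyond the paper, which only asserts that fact.
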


In this paper, we formulate GPMF as a hierarchical Bayesian model based on this hierarchical representation.

\subsection{Bayesian GPMF Model}\label{sec:bgpmf_model}

Based on the hierarchical representation of the generalized Poisson distribution described in Section \ref{sec:compoundpoisson}, we define the hierarchical model for BGPMF as
\begin{align*}
  n_{ijk}\mid w_{ik},h_{jk},\xi_i
  &\sim
  \mathrm{Poisson}\bigl((1-\xi_i)w_{ik}h_{jk}\bigr), \\
  y_{ij}\mid n_{ij},\xi_i
  &\sim
  \mathrm{BT}(n_{ij},\xi_i), \\
  n_{ij}
  &=\sum_{k=1}^{K}n_{ijk}.
\end{align*}
The following result holds for this hierarchical model.

\begin{prop}
Under the hierarchical model defined above, the marginal distribution of $y_{ij}$ is
\[
y_{ij}
\sim
\mathrm{GP}\bigl((1-\xi_i)s_{ij},\xi_i\bigr).
\]
Furthermore,
\[
E(y_{ij})=s_{ij},
\qquad
V(y_{ij})
=
\frac{s_{ij}}{(1-\xi_i)^2}.
\]
\end{prop}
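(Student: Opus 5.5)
The plan has two steps: first reduce the model to Corollary~\ref{cor:BT} by identifying the law of $n_{ij}$, then read off the moments, either from the GP moment formulas or directly from the hierarchy.

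\textbf{Step 1: marginal law.} Conditionally on $W,H,\xi_i$, the counts $n_{ij1},\ldots,n_{ijK}$ are independent Poisson variables with means $(1-\xi_i)w_{ik}h_{jk}$. By the additivity of independent Poisson variables,
\[
n_{ij}=\sum_{k=1}^K n_{ijk}\sim\mathrm{Poisson}\bigl((1-\xi_i)s_{ij}\bigr).
\]
The model specifies $y_{ij}\mid n_{ij}$ only through $n_{ij}$ and $\xi_i$, with $y_{ij}\mid n_{ij}\sim\mathrm{BT}(n_{ij},\xi_i)$. Hence Corollary~\ref{cor:BT}, applied with $\eta=(1-\xi_i)s_{ij}$ and $\mu=\xi_i$, gives $y_{ij}\sim\mathrm{GP}((1-\xi_i)s_{ij},\xi_i)$. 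The hypothesis $\mu\in(0,1]$ of Theorem~\ref{thm:borel} holds because $0<\xi_i<1$, and $\eta>0$ whenever $s_{ij}>0$. The degenerate case $s_{ij}=0$ gives $y_{ij}=0$ almost surely, consistent with the convention $Z=0$ when $N=0$.

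\textbf{Step 2: moments via the hierarchy.} I would compute the moments by the laws of total expectation and variance, which makes the argument self-contained. This needs the moments of $\mathrm{Borel}(\mu)$: mean $1/(1-\mu)$ and variance $\mu/(1-\mu)^3$. These follow from the Galton--Watson total-progeny interpretation of the Borel law, or from the functional equation $G(s)=s\,e^{\mu(G(s)-1)}$ for its probability generating function, differentiated once and twice at $s=1$.

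Write $\eta=(1-\xi_i)s_{ij}$. The Borel--Tanner variable is a sum of $n_{ij}$ i.i.d.\ Borel variables, so
\[
E(y_{ij}\mid n_{ij})=\frac{n_{ij}}{1-\xi_i},\qquad V(y_{ij}\mid n_{ij})=\frac{n_{ij}\,\xi_i}{(1-\xi_i)^3}.
\]
Taking expectations, $E(y_{ij})=\eta/(1-\xi_i)=s_{ij}$. For the variance, the compound Poisson formula gives
\[
V(y_{ij})=\frac{\eta\,\xi_i}{(1-\xi_i)^3}+\frac{\eta}{(1-\xi_i)^2}=\frac{\eta}{(1-\xi_i)^3}=\frac{s_{ij}}{(1-\xi_i)^2}.
\]

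As a cross-check, the known GP moments $E=\eta/(1-\xi)$ and $V=\eta/(1-\xi)^3$ give the same answer. They also agree with Section~\ref{sec:gpmf} under the substitution $\xi_i=\theta_i/(1+\theta_i)$, since then $1/(1-\xi_i)=1+\theta_i$.

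The only nontrivial ingredient is the Borel moments; everything else is bookkeeping. I would derive them from the generating-function equation by implicit differentiation, using $G(1)=1$ and $\mu<1$ so that the derivatives at $s=1$ are finite.
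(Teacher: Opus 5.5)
Your proposal is correct and follows the paper's route: Poisson additivity gives $n_{ij}\sim\mathrm{Poisson}((1-\xi_i)s_{ij})$, and Corollary~\ref{cor:BT} then yields the GP marginal, which is all the paper's one-line proof says. Your computation of the moments by the law of total variance, using the Borel mean $1/(1-\xi_i)$ and variance $\xi_i/(1-\xi_i)^3$, is correct and just makes explicit what the paper leaves to the standard GP moment formulas $\eta/(1-\xi)$ and $\eta/(1-\xi)^3$.
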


\begin{proof}
This follows immediately from Corollary \ref{cor:BT}.
\end{proof}

Using this hierarchical representation, we retain the same marginal distribution as in GPMF while facilitating the derivation of the Gibbs sampler described later.

Furthermore, given $w_{ik}$, $h_{jk}$, and $\xi_i$, the Poisson splitting property implies that the conditionally independent representation
\[
n_{ijk}\mid w_{ik},h_{jk},\xi_i
\sim
\mathrm{Poisson}\bigl((1-\xi_i)w_{ik}h_{jk}\bigr),
\qquad k=1,\ldots,K,
\]
is equivalent to
\begin{align*}
  n_{ij}\mid s_{ij},\xi_i
  &\sim
  \mathrm{Poisson}\bigl((1-\xi_i)s_{ij}\bigr),\\
  (n_{ij1},\ldots,n_{ijK})\mid n_{ij},W,H
  &\sim
  \mathrm{Multinomial}
  \left(n_{ij},\boldsymbol{\pi}_{ij}\right),\\
  \boldsymbol{\pi}_{ij}
  &=
  \left(
  \frac{w_{i1}h_{j1}}{s_{ij}},
  \ldots,
  \frac{w_{iK}h_{jK}}{s_{ij}}
  \right).
\end{align*}
In what follows, we use these two equivalent representations as appropriate for the derivations and algorithms.

Following the Bayesian formulation of NBMF \citep{Gouvert2018}, we assign independent gamma prior distributions to the elements of $W$ and $H$:
\begin{align*}
	w_{ik} & \sim \mathrm{Gamma}(\alpha_w, \beta_w),\\
	h_{jk} & \sim \mathrm{Gamma}(\alpha_h, \beta_h).
\end{align*}
For the prior distribution of $\boldsymbol{\xi}$, we assume that the elements $\xi_i$ are independent and have probability density function
\begin{align*}
	p(\xi_i)\propto \xi_i^{\alpha_\xi-1} (1-\xi_i)^{\beta_{\xi}-1}
	\exp\{-\gamma_{\xi} \xi_i \}.
\end{align*}
We denote this distribution by
\[
\mathrm{EBeta}(\alpha_\xi,\beta_\xi,\gamma_\xi).
\]
This distribution is obtained by applying the Esscher transform to a beta distribution. The Esscher transform, also known as exponential tilting, transforms a probability density function $p(x)$ into a new probability density function of the form
\[
p_{\gamma}(x)
=
\dfrac{e^{-\gamma x}}{E(e^{-\gamma x})}p(x).
\]
When $\gamma=0$, the distribution reduces to the ordinary beta distribution
$\mathrm{Beta}(\alpha_{\xi},\beta_{\xi})$.
Hereafter, we refer to this distribution as the exponentially tilted beta distribution.

Thus, treating $N=(n_{ijk})$ as latent variables, the joint posterior distribution of $W,H,\boldsymbol{\xi},N$ is given by
\begin{align*}
p(W,H,\boldsymbol{\xi},N|Y)
 & \propto p(Y|N,\boldsymbol{\xi})p(N|W,H,\boldsymbol{\xi})p(W)p(H)p(\boldsymbol{\xi})\\ 
 & \propto 
   \prod_{i,j:y_{ij}\ge 1} \dfrac{n_{ij}}{y_{ij}}
   \dfrac{\exp\{-\xi_i y_{ij} \}(\xi_i y_{ij})^{y_{ij}-n_{ij}}}
   {(y_{ij}-n_{ij})!}\\
 & \hspace*{2em}\cdot
   \prod_{i,j,k}
   \dfrac{\{(1-\xi_i)w_{ik}h_{jk}\}^{n_{ijk}}
   \exp\{-(1-\xi_i)w_{ik}h_{jk}\}}
   {n_{ijk}!}\\
 & \hspace*{2em}\cdot
   \prod_{i,k}w_{ik}^{\alpha_w-1}\exp\{-\beta_w w_{ik}\}
   \prod_{j,k}h_{jk}^{\alpha_h-1}\exp\{-\beta_h h_{jk}\}\\
 & \hspace*{2em}\cdot
   \prod_i \xi_i^{\alpha_{\xi}-1}(1-\xi_i)^{\beta_{\xi}-1}
   \exp\{-\gamma_{\xi}\xi_i\}.
\end{align*}

It is difficult to analytically derive the marginal posterior distributions from this joint posterior distribution. Therefore, in Section \ref{sec:gibbs}, we derive the full conditional posterior distributions of the model parameters and latent variables and construct a Gibbs sampler.

\subsection{Gibbs Sampler}\label{sec:gibbs}

We construct a Gibbs sampler by deriving the full conditional posterior distributions of the model parameters and latent variables. In what follows, ``$-$'' denotes all variables other than the variable of interest, together with the observed data.

First, the full conditional posterior distributions of $W$, $H$, and $\boldsymbol{\xi}$ are obtained from conjugacy as follows.

\begin{prop}
The full conditional posterior distributions of $W$, $H$, and $\boldsymbol{\xi}$ are
\begin{align*}
w_{ik}\mid -
&\sim
\mathrm{Gamma}
\left(
\alpha_w+\sum_j n_{ijk},
~
\beta_w+(1-\xi_i)\sum_j h_{jk}
\right),\\
h_{jk}\mid -
&\sim
\mathrm{Gamma}
\left(
\alpha_h+\sum_i n_{ijk},
~
\beta_h+\sum_i(1-\xi_i)w_{ik}
\right),\\
\xi_i\mid -
&\sim
\mathrm{EBeta}
\left(
\alpha_\xi+\sum_j(y_{ij}-n_{ij}),
~
\beta_\xi+\sum_j n_{ij},
~
\gamma_\xi+\sum_j(y_{ij}-s_{ij})
\right).
\end{align*}
\end{prop}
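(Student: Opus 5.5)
The plan is to read each full conditional directly off the joint posterior $p(W,H,\boldsymbol{\xi},N\mid Y)$ displayed above. For each variable of interest, all other variables and the data are held fixed, every factor not depending on that variable is absorbed into the proportionality constant, and the remaining kernel is matched against the gamma or $\mathrm{EBeta}$ family. Throughout, I will use that the Borel--Tanner factor is present only for $y_{ij}\ge1$. When $y_{ij}=0$, the model $\mathrm{BT}(n_{ij},\xi_i)$ puts positive mass at $0$ only if $n_{ij}=0$, with probability one. So under the posterior, $y_{ij}=0$ forces $n_{ij}=0$, and those cells contribute nothing to the sums beyond what is written.

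\emph{Conditional of $w_{ik}$.} The factors involving $w_{ik}$ are the Poisson terms $\prod_j \{(1-\xi_i)w_{ik}h_{jk}\}^{n_{ijk}}\exp\{-(1-\xi_i)w_{ik}h_{jk}\}$ and the prior $w_{ik}^{\alpha_w-1}e^{-\beta_w w_{ik}}$. Collecting the powers of $w_{ik}$ gives exponent $\alpha_w+\sum_j n_{ijk}-1$. Collecting the linear terms in the exponential gives rate $\beta_w+(1-\xi_i)\sum_j h_{jk}$, which is the stated gamma law.

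\emph{Conditional of $h_{jk}$.} This is symmetric: the product now runs over $i$, and the rate $(1-\xi_i)$ stays inside the sum, giving $\beta_h+\sum_i(1-\xi_i)w_{ik}$. No Borel--Tanner factor involves $W$ or $H$, so these steps are immediate from conjugacy.

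\emph{Conditional of $\xi_i$.} This is the only step needing care, so I will collect the three sources of $\xi_i$-dependence explicitly, for fixed $i$.

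First, the Borel--Tanner factors $\prod_{j:y_{ij}\ge1}\xi_i^{\,y_{ij}-n_{ij}}e^{-\xi_i y_{ij}}$. The remaining quantities $y_{ij}^{y_{ij}-n_{ij}}$, $n_{ij}/y_{ij}$ and the factorials are free of $\xi_i$.

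Second, the Poisson factors $\prod_{j,k}(1-\xi_i)^{n_{ijk}}e^{-(1-\xi_i)w_{ik}h_{jk}}$. Summing over $k$, these become $(1-\xi_i)^{\sum_j n_{ij}}\exp\{-\sum_j s_{ij}\}\exp\{\xi_i\sum_j s_{ij}\}$, where the middle factor is constant in $\xi_i$.

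Third, the prior $\xi_i^{\alpha_\xi-1}(1-\xi_i)^{\beta_\xi-1}e^{-\gamma_\xi\xi_i}$.

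Multiplying these gives the kernel
\[
\xi_i^{\alpha_\xi+\sum_j(y_{ij}-n_{ij})-1}(1-\xi_i)^{\beta_\xi+\sum_j n_{ij}-1}\exp\Bigl\{-\xi_i\Bigl(\gamma_\xi+\sum_j(y_{ij}-s_{ij})\Bigr)\Bigr\}.
\]
By the remark on zero cells, extending the sums to all $j$ is harmless.

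The only subtlety I anticipate is checking that this is a proper $\mathrm{EBeta}$ density. The tilting parameter $\gamma_\xi+\sum_j(y_{ij}-s_{ij})$ may be negative. However, the support is $(0,1)$, so $e^{-\gamma\xi_i}$ is bounded there for any real $\gamma$, and integrability only requires positive beta-type exponents. These are positive because $y_{ij}\ge n_{ij}$ (the Borel--Tanner support) and $\alpha_\xi,\beta_\xi>0$. Hence the kernel normalizes to $\mathrm{EBeta}$ with the stated parameters, which completes the argument.
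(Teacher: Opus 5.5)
Your proposal is correct and takes the same route as the paper, whose proof simply combines the likelihood and priors in the joint posterior and reads off the conjugate kernels. You also make explicit three details the paper leaves implicit: the bookkeeping of the three sources of $\xi_i$-dependence, the fact that $y_{ij}=0$ forces $n_{ij}=0$ (so the sums may run over all $j$), and the check that the $\mathrm{EBeta}$ kernel stays integrable when the tilting parameter is negative.
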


\begin{proof}
The result follows directly by combining the likelihood and prior distributions and using conjugacy.
\end{proof}

Next, we consider the full conditional posterior distribution of the latent variable $n_{ij}$. When $y_{ij}\ge 1$, $n_{ij}$ takes values in $1,\ldots,y_{ij}$, and the full conditional posterior distribution of $n_{ij}-1$ is a binomial distribution.

\begin{prop}
When $y_{ij}\ge1$,
\[
n_{ij}-1\mid -
\sim
\mathrm{Binomial}
\left(
y_{ij}-1,
\frac{(1-\xi_i)s_{ij}}
{(1-\xi_i)s_{ij}+\xi_i y_{ij}}
\right).
\]
When $y_{ij}=0$, $n_{ij}=0$ with probability one.
\end{prop}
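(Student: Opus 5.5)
The plan is to compute the full conditional of $n_{ij}$ directly from the joint posterior. I will use the equivalent Poisson–multinomial representation, in which $n_{ij}\mid s_{ij},\xi_i\sim\mathrm{Poisson}((1-\xi_i)s_{ij})$. Conditioning on everything except $n_{ij}$ and its split $(n_{ij1},\ldots,n_{ijK})$, the relevant factors are the Poisson prior for $n_{ij}$ and the Borel--Tanner likelihood $p(y_{ij}\mid n_{ij},\xi_i)$. No other factor depends on $n_{ij}$. (If one conditions on the split, $n_{ij}$ is determined; so the statement should be read as sampling $n_{ij}$ after marginalizing the multinomial allocation, which is then drawn separately.)

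First I dispose of the case $y_{ij}=0$. The Borel--Tanner distribution $\mathrm{BT}(n,\xi_i)$ has support $\{n,n+1,\ldots\}$ for $n\ge1$, and by convention $\mathrm{BT}(0,\xi_i)$ is a point mass at $0$. Hence $y_{ij}=0$ forces $n_{ij}=0$. Conversely, if $y_{ij}\ge1$, then $n_{ij}=0$ has zero likelihood, and $n_{ij}\le y_{ij}$ by the support. So $n_{ij}\in\{1,\ldots,y_{ij}\}$.

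For $y_{ij}\ge1$, write $n=n_{ij}$, $y=y_{ij}$, $\lambda=(1-\xi_i)s_{ij}$. Multiplying the two factors gives
\[
p(n\mid -)\propto \frac{\lambda^{n}}{n!}\cdot\frac{n}{y}\frac{(\xi_i y)^{y-n}}{(y-n)!},
\]
after dropping $e^{-\lambda}$ and $e^{-\xi_i y}$, which do not depend on $n$. The key simplification is $n/n!=1/(n-1)!$. Setting $m=n-1\in\{0,\ldots,y-1\}$, the expression becomes
\[
\frac{\lambda^{m+1}(\xi_i y)^{(y-1)-m}}{m!\,((y-1)-m)!}\propto\binom{y-1}{m}\lambda^{m}(\xi_i y)^{y-1-m}.
\]
Normalizing by $(\lambda+\xi_i y)^{y-1}$ gives exactly $\mathrm{Binomial}(y-1,\lambda/(\lambda+\xi_i y))$, which is the stated parameter.

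The main obstacle is conceptual rather than computational. It lies in justifying which factors enter the conditional, namely treating $n_{ij}$ with $n_{ijk}$ collapsed through the Poisson splitting equivalence rather than conditioning on the $n_{ijk}$. The remaining care point is the cancellation $n/n!=1/(n-1)!$ that shifts the support to $\{0,\ldots,y-1\}$.
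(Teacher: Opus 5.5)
Your proposal is correct and follows essentially the same route as the paper: multiply the $\mathrm{Poisson}((1-\xi_i)s_{ij})$ factor by the Borel--Tanner likelihood, cancel $n_{ij}/n_{ij}! = 1/(n_{ij}-1)!$, shift to $m_{ij}=n_{ij}-1$, and recognize the binomial kernel. Your explicit treatment of the case $y_{ij}=0$ and your remark that the update is a blocked draw with the multinomial allocation marginalized out are clarifications the paper leaves implicit, since its use of $s_{ij}$ presupposes that collapse.
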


\begin{proof}
The full conditional posterior distribution of $n_{ij}$ satisfies
\[
p(n_{ij}\mid-)
\propto
\frac{\{(1-\xi_i)s_{ij}\}^{n_{ij}}
(\xi_i y_{ij})^{y_{ij}-n_{ij}}}
{(y_{ij}-n_{ij})!(n_{ij}-1)!}.
\]
Letting $m_{ij}=n_{ij}-1$, we obtain
\[
p(m_{ij}\mid-)
\propto
\frac{\{(1-\xi_i)s_{ij}\}^{m_{ij}}
(\xi_i y_{ij})^{y_{ij}-1-m_{ij}}}
{(y_{ij}-1-m_{ij})!m_{ij}!},
\qquad
m_{ij}=0,\ldots,y_{ij}-1.
\]
Therefore,
\[
m_{ij}\mid-
\sim
\mathrm{Binomial}
\left(
y_{ij}-1,
\frac{(1-\xi_i)s_{ij}}
{(1-\xi_i)s_{ij}+\xi_i y_{ij}}
\right),
\]
which proves the result.
\end{proof}

After updating $n_{ij}$, its value needs to be allocated to
$n_{ij1},\ldots,n_{ijK}$ corresponding to the $K$ latent factors.
By the Poisson splitting property, their conditional distribution is a multinomial distribution.

\begin{prop}
Given $n_{ij}$,
\[
(n_{ij1},\ldots,n_{ijK})\mid -,n_{ij}
\sim
\mathrm{Multinomial}
\left(
n_{ij},
\boldsymbol{\pi}_{ij}
\right),
\]
where
\[
\boldsymbol{\pi}_{ij}
=
\left(
\frac{w_{i1}h_{j1}}{s_{ij}},
\ldots,
\frac{w_{iK}h_{jK}}{s_{ij}}
\right).
\]
\end{prop}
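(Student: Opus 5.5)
The plan is to compute the full conditional of $(n_{ij1},\ldots,n_{ijK})$ directly from the joint posterior given earlier in this section, conditioning on everything else including the total $n_{ij}$. The key structural fact is that the observation factor $p(y_{ij}\mid n_{ij},\xi_i)$, namely the Borel--Tanner term, depends on the vector $(n_{ij1},\ldots,n_{ijK})$ only through its sum $n_{ij}$. Likewise, no other factor of the joint posterior involving $W$, $H$, $\boldsymbol{\xi}$ or the other cells $(i',j')$ touches these particular latent counts. Hence, once $n_{ij}$ is fixed, the Borel--Tanner factor is a constant. The only remaining dependence is through the $K$ Poisson factors for cell $(i,j)$, restricted to the simplex $\{(n_{ij1},\ldots,n_{ijK})\in\mathbb{Z}_{\ge0}^K:\sum_k n_{ijk}=n_{ij}\}$.

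On that set, I will write
\[
p(n_{ij1},\ldots,n_{ijK}\mid -,n_{ij})
\propto
\prod_{k=1}^K
\frac{\{(1-\xi_i)w_{ik}h_{jk}\}^{n_{ijk}}}{n_{ijk}!}.
\]
Here the exponential factors $\exp\{-(1-\xi_i)w_{ik}h_{jk}\}$ have been dropped because they do not depend on the counts. Next, I will factor out $(1-\xi_i)^{n_{ij}}s_{ij}^{n_{ij}}$, which is constant given $n_{ij}$, using $\sum_k n_{ijk}=n_{ij}$. What remains is
\[
\prod_k \pi_{ijk}^{n_{ijk}}/n_{ijk}!,
\qquad
\pi_{ijk}=w_{ik}h_{jk}/s_{ij}.
\]
Multiplying by the constant $n_{ij}!$ identifies this kernel as the multinomial pmf. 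Normalization follows from the multinomial theorem, since $\sum_k\pi_{ijk}=1$.

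An alternative route is to invoke directly the Poisson splitting equivalence already stated in Section \ref{sec:bgpmf_model}. This gives the prior conditional $p(n_{ij1},\ldots,n_{ijK}\mid n_{ij},W,H)$ as exactly this multinomial, and the argument above shows that conditioning on $Y$ and $\boldsymbol{\xi}$ does not change it.

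There is no real obstacle. The only point needing care is the claim that $y_{ij}$ depends on the latent allocation only through $n_{ij}$, which holds by the construction of the hierarchical model. A second, minor point is the degenerate case $n_{ij}=0$ (in particular $y_{ij}=0$), where the multinomial is the point mass at zero, consistent with the model.
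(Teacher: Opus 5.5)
Your proposal is correct and is essentially the paper's argument. The paper simply cites the Poisson splitting property, and your direct computation (the independent Poisson kernels restricted to $\sum_k n_{ijk}=n_{ij}$ reduce to $\prod_k \pi_{ijk}^{n_{ijk}}/n_{ijk}!$) is exactly the proof of that property. Beyond the paper's one line, you make explicit that the Borel--Tanner factor depends on the allocation only through $n_{ij}$, which is what lets the prior splitting statement carry over to the full conditional given $Y$ and $\boldsymbol{\xi}$; the paper leaves this step implicit.
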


\begin{proof}
This follows immediately from the Poisson splitting property.
\end{proof}

From the above results, the Gibbs sampler updates the variables in each iteration in the following order:
\[
n_{ij}
\rightarrow
(n_{ij1},\ldots,n_{ijK})
\rightarrow
W
\rightarrow
H
\rightarrow
\boldsymbol{\xi}.
\]
Algorithm \ref{alg:gibbs} summarizes the Gibbs sampling procedure.

A practical issue in implementing the Gibbs sampler is sampling from the EBeta distribution when updating $\xi_i$. A simple approach is Sampling/Importance Resampling (SIR), in which a large number of samples are generated from the beta distribution
\[
\mathrm{Beta}
\left(
\alpha_\xi+\sum_j(y_{ij}-n_{ij}),
~
\beta_\xi+\sum_j n_{ij}
\right)
\]
and one of them is selected with probability proportional to the weight
\[
\exp\left\{
-\left(
\gamma_\xi+\sum_j(y_{ij}-s_{ij})
\right)\xi_i
\right\}.
\]

However, this approach requires generating a large number of candidate samples and is therefore not necessarily computationally efficient. In Section \ref{sec:ebeta}, we develop a more efficient sampling method by representing the EBeta distribution as an infinite mixture of beta distributions.

\begin{algorithm}
\caption{Gibbs sampler for the BGPMF model}
\label{alg:gibbs}
\begin{algorithmic}[1]

\Require Observed matrix $Y$; hyperparameters
$\alpha_w,\beta_w,\alpha_h,\beta_h,\alpha_\xi,\beta_\xi,\gamma_\xi$;
number of iterations $T$

\Ensure Posterior samples of $W,H,\boldsymbol{\xi},N$

\State Initialize $W,H,\boldsymbol{\xi},N$

\For{$t=1,\ldots,T$}

  \State Compute
  $s_{ij}\gets\sum_{k=1}^{K}w_{ik}h_{jk}$
  for all $(i,j)$

  \For{each $(i,j)$}

    \If{$y_{ij}=0$}
      \State Set $n_{ij}\gets0$
    \Else
      \State Sample
      $n_{ij}\sim
      1+\mathrm{Binomial}\!\left(
      y_{ij}-1,
      \dfrac{(1-\xi_i)s_{ij}}
      {(1-\xi_i)s_{ij}+\xi_i y_{ij}}
      \right)$
    \EndIf

    \State Set
    $\pi_{ijk}\gets w_{ik}h_{jk}/s_{ij}$
    for $k=1,\ldots,K$

    \State Sample
    $(n_{ij1},\ldots,n_{ijK})
    \sim
    \mathrm{Multinomial}
    \left(n_{ij},\boldsymbol{\pi}_{ij}\right)$

  \EndFor

  \For{each $(i,k)$}

    \State Sample
    $w_{ik}\sim
    \mathrm{Gamma}\!\left(
    \alpha_w+\sum_j n_{ijk},
    \beta_w+(1-\xi_i)\sum_j h_{jk}
    \right)$

  \EndFor

  \For{each $(j,k)$}

    \State Sample
    $h_{jk}\sim
    \mathrm{Gamma}\!\left(
    \alpha_h+\sum_i n_{ijk},
    \beta_h+\sum_i(1-\xi_i)w_{ik}
    \right)$

  \EndFor

  \State Recompute
  $s_{ij}\gets\sum_{k=1}^{K}w_{ik}h_{jk}$
  for all $(i,j)$

  \For{each $i$}

    \State Set
    $a_i\gets
    \alpha_\xi+\sum_j(y_{ij}-n_{ij})$

    \State Set
    $b_i\gets
    \beta_\xi+\sum_j n_{ij}$

    \State Set
    $g_i\gets
    \gamma_\xi+\sum_j(y_{ij}-s_{ij})$

    \State Sample
    $\xi_i\sim
    \mathrm{EBeta}(a_i,b_i,g_i)$
    using Algorithm \ref{alg:ebeta}

  \EndFor

  \State Store $W,H,\boldsymbol{\xi},N$

\EndFor

\end{algorithmic}
\end{algorithm}

\subsection{Infinite Mixture Representation of the EBeta Distribution}\label{sec:ebeta}

In Section \ref{sec:gibbs}, we noted that SIR can be used to generate samples from the EBeta distribution. However, this method requires generating a large number of random variables from a beta distribution to obtain a single sample and is therefore computationally inefficient. In this section, we first represent the EBeta distribution as an infinite mixture of beta distributions. We then describe a sampling method based on this infinite mixture representation and introduce the finite approximation used in this study.

Including the normalizing constant, the probability density function of
$\mathrm{EBeta}(\alpha,\beta,\gamma)$ is
\[
p(\xi)=
\dfrac{1}{B(\alpha,\beta){}_1F_1(\alpha;\alpha+\beta;-\gamma)}
\xi^{\alpha-1}(1-\xi)^{\beta-1}\exp\{-\gamma\xi\},
\]
where ${}_1F_1$ denotes the confluent hypergeometric function. In what follows, we derive an infinite mixture representation of the EBeta distribution by expanding this probability density function.

First, consider the case $\gamma\geq0$. The exponential term can be expanded as a power series in $(1-\xi)$:
\[
\exp\{-\gamma\xi\}
=
\exp\{-\gamma\}\exp\{\gamma(1-\xi)\}
=
\exp\{-\gamma\}
\sum_{m=0}^{\infty}
\dfrac{\gamma^m}{m!}(1-\xi)^m.
\]
Therefore,
\[
p(\xi)
\propto
\sum_{m=0}^{\infty}
\dfrac{\gamma^m}{m!}
\xi^{\alpha-1}(1-\xi)^{\beta+m-1}.
\]
Thus, the EBeta distribution can be represented as an infinite mixture of beta distributions
$\mathrm{Beta}(\alpha,\beta+m)$.

The mixing probability $p_m$ corresponding to each beta distribution is given by
\begin{align*}
p_m
&=
\dfrac{
\frac{\gamma^m}{m!}B(\alpha,\beta+m)
}{
\displaystyle\sum_{l=0}^{\infty}
\frac{\gamma^l}{l!}B(\alpha,\beta+l)
},
\qquad
m=0,1,\ldots.
\end{align*}
Using the Pochhammer symbol
\[
(\beta)_m
=
\beta(\beta+1)\cdots(\beta+m-1),
\]
we can write
\[
B(\alpha,\beta+m)
=
\dfrac{\Gamma(\alpha)\Gamma(\beta+m)}
{\Gamma(\alpha+\beta+m)}
=
B(\alpha,\beta)
\dfrac{(\beta)_m}{(\alpha+\beta)_m}.
\]
Hence,
\begin{align*}
p_m
&=
\dfrac{1}
{{}_1F_1(\beta;\alpha+\beta;\gamma)}
w_m,
\qquad
w_m
=
\dfrac{(\beta)_m}{(\alpha+\beta)_m}
\dfrac{\gamma^m}{m!}.
\end{align*}
The weights $w_m$ can be computed recursively using
\begin{align}
w_{m+1}
=
\dfrac{(\beta+m)\gamma}
{(\alpha+\beta+m)(m+1)}
w_m.
\label{eq:positive-gamma}
\end{align}
By numerically evaluating the confluent hypergeometric function
${}_1F_1(\beta;\alpha+\beta;\gamma)$, the weights $w_m$ can be computed recursively and normalized to obtain $p_m$. The mixture component index $m$ can then be generated by the inverse transform method by comparing a uniform random variable with the cumulative probabilities of $p_m$.

\begin{algorithm}
\caption{Sampling from the finite approximation to
$\mathrm{EBeta}(\alpha,\beta,\gamma)$}
\label{alg:ebeta}
\begin{algorithmic}[1]

\Require Parameters $\alpha,\beta,\gamma$;
truncation level $M$

\Ensure A sample $\xi$ from the finite approximation to
$\mathrm{EBeta}(\alpha,\beta,\gamma)$

\If{$\gamma\ge0$}

\State Set $w_0\gets1$

\For{$m=0,\ldots,M-1$}
    \State Compute
    \[
    w_{m+1}\gets
    \frac{(\beta+m)\gamma}
    {(\alpha+\beta+m)(m+1)}
    w_m
    \]
\EndFor

\State Compute
\[
\tilde p_m
\gets
\frac{w_m}{\sum_{l=0}^{M}w_l},
\qquad m=0,\ldots,M
\]

\State Sample
$R\sim
\mathrm{Categorical}
(\tilde p_0,\ldots,\tilde p_M)$

\State Generate
$\xi\sim
\mathrm{Beta}
(\alpha,\beta+R)$

\Else

\State Set $w_0\gets1$

\For{$m=0,\ldots,M-1$}
    \State Compute
    \[
    w_{m+1}\gets
    \frac{(\alpha+m)|\gamma|}
    {(\alpha+\beta+m)(m+1)}
    w_m
    \]
\EndFor

\State Compute
\[
\tilde p_m
\gets
\frac{w_m}{\sum_{l=0}^{M}w_l},
\qquad m=0,\ldots,M
\]

\State Sample
$R\sim
\mathrm{Categorical}
(\tilde p_0,\ldots,\tilde p_M)$

\State Generate
$\xi\sim
\mathrm{Beta}
(\alpha+R,\beta)$

\EndIf

\end{algorithmic}
\end{algorithm}

In this study, for simplicity of implementation, we instead use a sufficiently large truncation level $M$ and approximate $p_m$ by
\[
\tilde{p}_m
=
\dfrac{w_m}
{\displaystyle\sum_{l=0}^M w_l},
\qquad
m=0,\ldots,M.
\]
The mixture component index $m$ is then generated using these approximate mixing probabilities.

Next, consider the case $\gamma<0$. In this case, the exponential term can be expanded as a power series in $\xi$:
\[
\exp\{-\gamma\xi\}
=
\exp\{|\gamma|\xi\}
=
\sum_{m=0}^{\infty}
\dfrac{|\gamma|^m}{m!}\xi^m.
\]
Thus,
\[
p(\xi)
\propto
\xi^{\alpha-1}(1-\xi)^{\beta-1}\exp\{-\gamma\xi\}
\propto
\sum_{m=0}^{\infty}
\dfrac{|\gamma|^m}{m!}
\xi^{\alpha+m-1}(1-\xi)^{\beta-1}.
\]
Therefore, the EBeta distribution can be represented as an infinite mixture of beta distributions
$\mathrm{Beta}(\alpha+m,\beta)$.

As in the case $\gamma\ge0$, the mixing probability $p_m$ and weight $w_m$ are given by
\begin{align}
p_m
&=
\dfrac{1}
{{}_1F_1(\alpha;\alpha+\beta;|\gamma|)}
w_m,\nonumber\\
w_m
&=
\dfrac{(\alpha)_m}{(\alpha+\beta)_m}
\dfrac{|\gamma|^m}{m!},\nonumber\\
w_{m+1}
&=
\dfrac{(\alpha+m)|\gamma|}
{(\alpha+\beta+m)(m+1)}
w_m.
\label{eq:negative-gamma}
\end{align}
In this case, as in the case $\gamma\ge0$, the mixture component index $m$ can be generated either by numerically evaluating the confluent hypergeometric function and applying the inverse transform method, or by using a finite approximation to $p_m$.

It follows that a sample from the EBeta distribution can be generated by first selecting a mixture component index $m$ according to the mixing probabilities $p_m$ and then generating a random variable from the corresponding beta distribution,
$\mathrm{Beta}(\alpha,\beta+m)$ when $\gamma\ge0$, or
$\mathrm{Beta}(\alpha+m,\beta)$ when $\gamma<0$.
In this study, we select the mixture component index $m$ using the finite approximation $\tilde{p}_m$ to the mixing probabilities $p_m$. Algorithm \ref{alg:ebeta} summarizes the finite approximation method used in this study.

We next examine the accuracy of the finite approximation used in this study. Preliminary numerical experiments showed that the approximation error was only weakly affected by the values of $\alpha$ and $\beta$, whereas it was primarily determined by the magnitude of the tilting parameter $\gamma$. Therefore, we fix $\alpha=\beta=5$ and evaluate the accuracy of the finite approximation for $\gamma=5$ and $\gamma=100$.

Table \ref{tab:precision} shows the differences between the approximated and true values of the mean, variance, and percentiles for various values of $M$. The true values of the mean and variance were obtained from theoretical expressions involving the confluent hypergeometric function ${}_1F_1$, whereas the true percentiles were obtained by numerically evaluating the quantile function of the EBeta distribution. Even for a large tilting parameter such as $\gamma=100$, the errors decreased to the level of machine precision when $M=200$. Therefore, we use $M=200$ in the numerical experiments that follow.

\begin{table}[tb]
\centering
\caption{Accuracy of the finite approximation (approximation $-$ true value)}
\label{tab:precision}

\begin{tabular}{ccccccc}\hline
$\gamma$ & $M$ & Mean error & Var error & $1\%$ error & $50\%$ error & $99\%$ error \\\hline
  & 10  & $1.409\times 10^{-4}$ & $-1.143\times 10^{-5}$ & $1.738\times 10^{-4}$ & $1.274\times 10^{-4}$ & $4.652\times 10^{-5}$ \\
  & 50  & $4.440\times 10^{-16}$ & $-1.665\times 10^{-16}$ & $-4.163\times 10^{-17}$ & $-3.331\times 10^{-16}$ & $-7.438\times 10^{-15}$ \\
5 & 100 & $4.440\times 10^{-16}$ & $-1.665\times 10^{-16}$ & $-4.163\times 10^{-17}$ & $-3.331\times 10^{-16}$ & $-7.438\times 10^{-15}$ \\
  & 200 & $4.440\times 10^{-16}$ & $-1.665\times 10^{-16}$ & $-4.163\times 10^{-17}$ & $-3.331\times 10^{-16}$ & $-7.438\times 10^{-15}$ \\
  & 500 & $4.440\times 10^{-16}$ & $-1.665\times 10^{-16}$ & $-4.163\times 10^{-17}$ & $-3.331\times 10^{-16}$ & $-7.438\times 10^{-15}$ \\\hline
  & 10 & $2.041\times 10^{-1}$ & $8.620\times 10^{-3}$ & $6.107\times 10^{-2}$ & $1.987\times 10^{-1}$ & $3.912\times 10^{-1}$ \\
  & 50 & $3.699\times 10^{-2}$ & $8.454\times 10^{-4}$ & $1.035\times 10^{-2}$ & $3.541\times 10^{-2}$ & $7.695\times 10^{-2}$ \\
100 & 100 & $2.143\times 10^{-3}$ & $2.887\times 10^{-5}$ & $7.026\times 10^{-4}$ & $2.138\times 10^{-3}$ & $3.621\times 10^{-3}$ \\ 
  & 200 & $1.596\times 10^{-16}$ & $-9.541\times 10^{-18}$ & $-5.204\times 10^{-18}$ & $-6.245\times 10^{-17}$ & $-3.039\times 10^{-15}$ \\
  & 500 & $1.596\times 10^{-16}$ & $-9.541\times 10^{-18}$ & $-5.204\times 10^{-18}$ & $-6.245\times 10^{-17}$ & $-3.039\times 10^{-15}$ \\\hline
\end{tabular}

\end{table}

\section{Simulation Studies}\label{sec:simulation}

In this section, we evaluate the estimation accuracy and computational efficiency of Bayesian GPMF through simulation studies. As a common setting, the dimensions of the data matrix were set to $I=50$ and $J=100$. The number of latent factors was set to $K=5$ and assumed to be known. The elements of the true basis matrix $W_0 \in \mathbb{R}^{I\times K}_{\geq 0}$ and coefficient matrix $H_0 \in \mathbb{R}^{J\times K}_{\geq 0}$ were independently generated from $\mathrm{Gamma}(1.5,1.5)$. We then computed $S_0=W_0H_0^{\top}$ and independently generated $y_{ij}$ from the generalized Poisson distribution
\[
y_{ij}
\sim
\mathrm{GP}\bigl((1-\xi_{0i})(s_0)_{ij},\xi_{0i}\bigr),
\]
where $(s_0)_{ij}$ denotes the $(i,j)$th element of $S_0$.

These settings are the same as those used in \citet{Ohashi2026}, except that \citet{Ohashi2026} used $\theta$ as the dispersion parameter, whereas we use $\xi$ in this paper. Each simulation experiment was repeated 100 times. Except for the experiment in Section \ref{sec:fam}, samples from the EBeta distribution were generated using the finite approximation method described in Section \ref{sec:ebeta}.

\subsection{Comparison with GPMF}\label{sec:mse}

We first compare Bayesian GPMF with GPMF. As evaluation measures, we use the mean squared errors (MSEs) of $W$, $H$, $S$, and $\xi$. Following \citet{Ohashi2026}, we conducted simulation experiments for both the case where $\xi_{0i}$ is the same for all $i$ and the case where $\xi_{0i}$ varies across $i$.

\citet{Ohashi2026} also considered $\theta_{0i}=0$, which corresponds to $\xi_{0i}=0$. In this paper, however, the support of the prior distribution for $\xi$ is $(0,1)$, and therefore we do not consider the boundary case $\xi_{0i}=0$. Instead, we include $\xi_{0i}=1/5$. The other values of $\xi_{0i}$ are $1/3$, $1/2$, $3/5$, and $2/3$, which correspond to $\theta_{0i}=0.5$, $1$, $1.5$, and $2$, respectively, in \citet{Ohashi2026}.

We also considered a heterogeneous setting in which the dispersion parameter varies across rows. In this setting, each of the five values above was assigned to 10 rows.

Considering the balance between computational cost and estimation accuracy, we ran the Gibbs sampler for 5,000 iterations with the first 1,000 iterations discarded as burn-in.

The results are shown in Table \ref{tab:mse}. For GPMF, we use the same results as those reported in \citet{Ohashi2026}, except for the case $\xi_{0i}=1/5$, for which we conducted an additional simulation experiment. As shown in Table \ref{tab:mse}, Bayesian GPMF yielded smaller MSEs than GPMF under all settings. In particular, the improvements in the estimation accuracy of the basis matrix $W$ and mean matrix $S$ were substantial, and the difference between Bayesian GPMF and GPMF tended to increase as the dispersion parameter $\xi$ increased. A similar tendency was observed under the heterogeneous setting, indicating that the advantage of Bayesian GPMF was maintained even when the dispersion parameter varied across rows.

\begin{table}[tb]
\centering
\caption{Average MSEs over 100 simulation runs.}
\label{tab:mse}
\begin{tabular}{lcccc}
\toprule
$\xi_{0i}$ & Parameter & \multicolumn{2}{c}{GPMF} & BGPMF \\
\cmidrule(lr){3-4}
 & & NNDSVD & Random &  \\
\midrule
\multirow{4}{*}{$1/5$}
& $W$   & 73.9 & 54.2 & 28.2 \\
& $H$   & 0.00219 & 0.00192 & 0.00150 \\
& $S$   & 1.70 & 1.64 & 1.05 \\
& $\xi$ & --- & --- & 0.00405 \\
\midrule
\multirow{4}{*}{$1/3$}
& $W$   & 110 & 90.1 & 42.6 \\
& $H$   & 0.00311 & 0.00288 & 0.00206 \\
& $S$   & 2.59 & 2.51 & 1.40 \\
& $\xi$ & --- & --- & 0.00437 \\
\midrule
\multirow{4}{*}{$1/2$}
& $W$   & 186 & 158 & 65.2 \\
& $H$   & 0.00444 & 0.00429 & 0.00283 \\
& $S$   & 4.59 & 4.25 & 1.94 \\
& $\xi$ & --- & --- & 0.00350 \\
\midrule
\multirow{4}{*}{$3/5$}
& $W$   & 264 & 211 & 75.7 \\
& $H$   & 0.00556 & 0.00526 & 0.00313 \\
& $S$   & 6.89 & 5.94 & 2.29 \\
& $\xi$ & --- & --- & 0.00311 \\
\midrule
\multirow{4}{*}{$2/3$}
& $W$   & 335 & 251 & 84.1 \\
& $H$   & 0.00652 & 0.00597 & 0.00323 \\
& $S$   & 9.95 & 7.61 & 2.66 \\
& $\xi$ & --- & --- & 0.00250 \\
\midrule
\multirow{4}{*}{Heterogeneous}
& $W$   & 286 & 153 & 57.3 \\
& $H$   & 0.00551 & 0.00409 & 0.00249 \\
& $S$   & 6.44 & 4.20 & 1.81 \\
& $\xi$ & --- & --- & 0.00346 \\
\bottomrule
\end{tabular}
\end{table}

\subsection{Coverage of Credible Intervals}\label{sec:cov}

Next, we examine the empirical coverage probabilities of the 95\% credible intervals obtained by Bayesian GPMF, that is, the proportions of credible intervals that contain the true values. To avoid the effects of the scale indeterminacy of $W$ and $H$ and the indeterminacy in the ordering of the latent factors, we constructed credible intervals for each element of $S=WH^{\top}$ and for each $\xi_i$. For each element, we calculated the proportion of the 100 simulation runs in which the credible interval contained the true value, and then averaged these proportions over all elements.

The results are shown in Table \ref{tab:coverage}. For both $S$ and $\xi_i$, the empirical coverage probabilities were generally close to the nominal level of 95\%. In particular, the coverage probabilities for the mean matrix $S$ were close to 95\% under all settings, indicating that the credible intervals obtained by Bayesian GPMF had appropriate coverage. For the dispersion parameter $\xi$, the coverage probabilities were slightly lower in some settings but were generally close to the nominal level.

\begin{table}[tb]
\centering
\caption{Empirical coverage probabilities of 95\% credible intervals for BGPMF.}
\label{tab:coverage}
\begin{tabular}{lcc}
\toprule
$\xi_{0i}$ & $S$ & $\xi$ \\
\midrule
$1/5$    & 0.954 & 0.953 \\
$1/3$    & 0.954 & 0.949 \\
$1/2$    & 0.959 & 0.944 \\
$3/5$    & 0.964 & 0.934 \\
$2/3$    & 0.966 & 0.947 \\
Heterogeneous & 0.958 & 0.948 \\
\bottomrule
\end{tabular}
\end{table}

\subsection{Behavior of the Gibbs Sampler}

To assess the convergence of the Gibbs sampler, we examine the effective sample size (ESS), trace plot, and autocorrelation function (ACF) using one of the 100 simulation runs as an example.

First, we calculated the ESSs of $\xi_i$ for two settings: 5,000 iterations with 1,000 burn-in iterations and 20,000 iterations with 4,000 burn-in iterations. Under the heterogeneous setting in Section \ref{sec:mse}, we calculated the mean, median, minimum, and maximum ESS for the rows corresponding to each value of $\xi_{0i}$. The results are shown in Table \ref{tab:ess}.

The ESS generally increased as the number of iterations increased. In particular, the rows with larger values of $\xi_{0i}$ showed better sampling efficiency and relatively large ESSs. For smaller values of $\xi_{0i}$, the ESSs were relatively small, but they improved substantially as the number of iterations increased.

\begin{table}[tb]
\centering
\caption{Effective sample sizes for $\xi_i$.}
\label{tab:ess}
\begin{tabular}{lllrrrr}
\toprule
Iterations & Burn-in & $\xi_{0i}$ & Mean ESS & Median ESS & Minimum ESS & Maximum ESS \\
\midrule
 & & $1/5$ & 177 & 124 & 49.7 & 494 \\
 & & $1/3$ & 202 & 221 & 46.5 & 418 \\
5000 & 1000 & $1/2$ & 478 & 473 & 183 & 672 \\
 & & $3/5$ & 541 & 548 & 370 & 674 \\
 & & $2/3$ & 545 & 543 & 487 & 615 \\
\midrule
 & & $1/5$ & 540 & 343 & 173 & 1434 \\
 & & $1/3$ & 800 & 697 & 219 & 2013 \\
20000 & 4000 & $1/2$ & 1644 & 1679 & 619 & 2402 \\
 & & $3/5$ & 2095 & 2270 & 1535 & 2600 \\
 & & $2/3$ & 2086 & 2120 & 1522 & 2360 \\
\bottomrule
\end{tabular}
\end{table}

For one of the rows corresponding to $\xi_{0i}=2/3$, the trace plot and ACF based on 20,000 iterations are shown in Figures \ref{fig:traceplot} and \ref{fig:acf}, respectively.

\begin{figure}[tb]
\centering
\includegraphics[width=14cm]{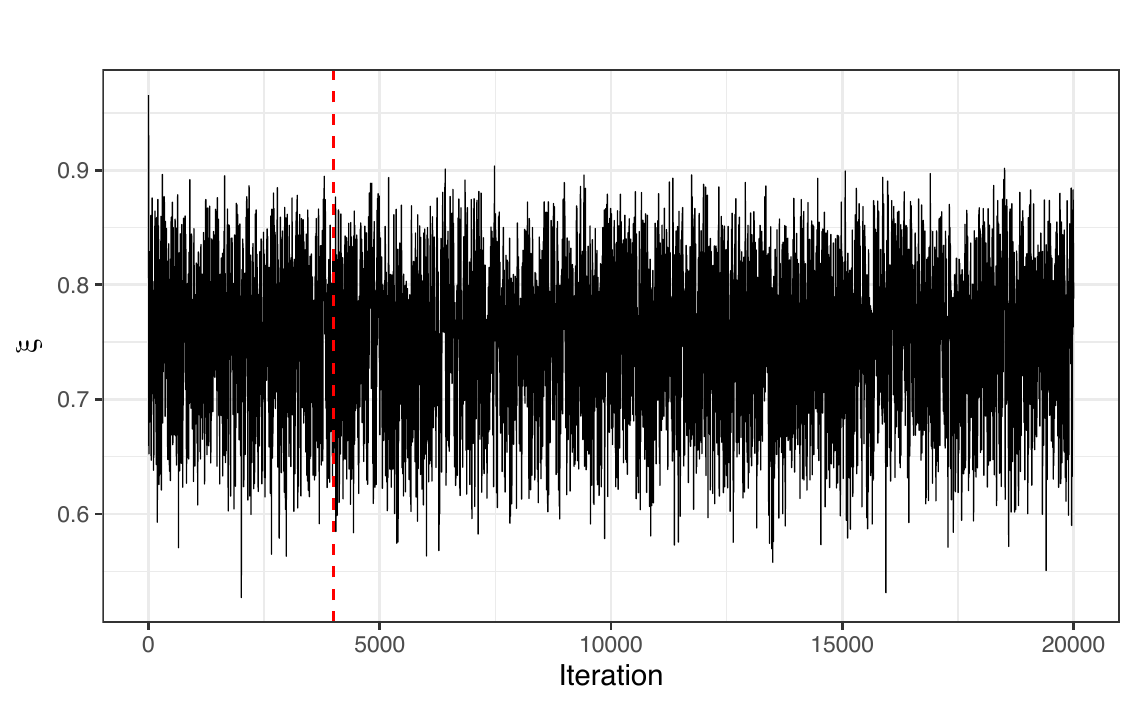}
\caption{Trace plot of $\xi_i$.}
\label{fig:traceplot}
\end{figure}

\begin{figure}[tb]
\centering
\includegraphics[width=14cm]{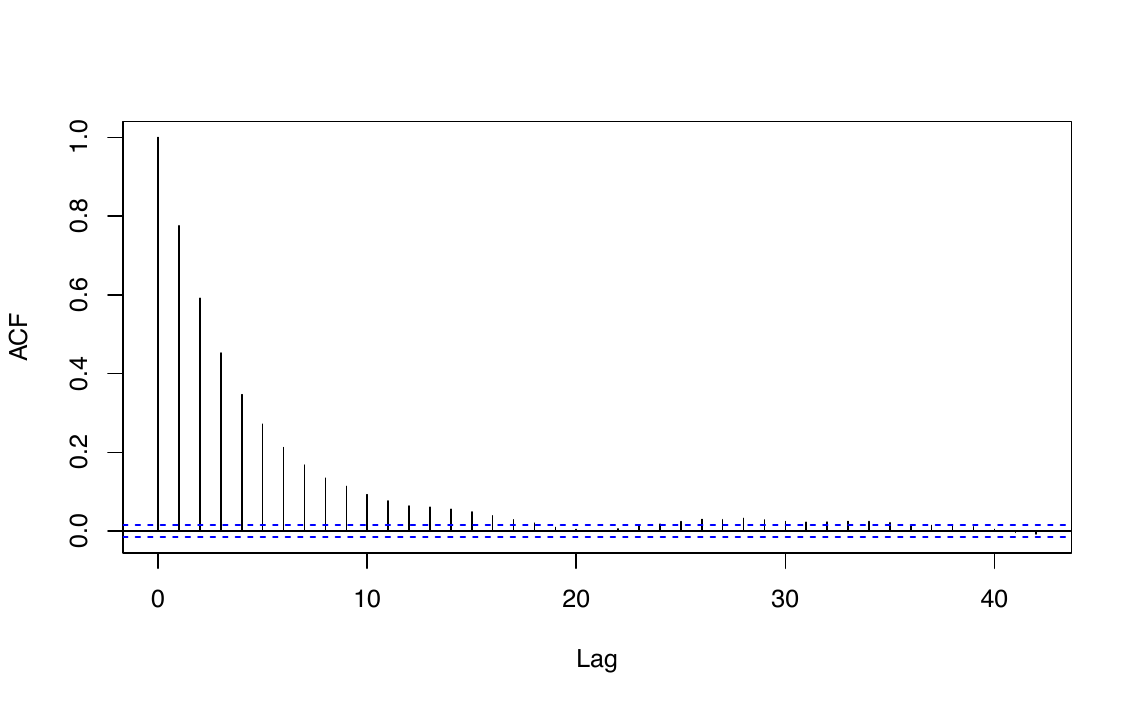}
\caption{Autocorrelation function of $\xi_i$.}
\label{fig:acf}
\end{figure}

The trace plot suggests that the displayed chain reached a stationary regime relatively quickly, with no apparent indication that a burn-in period longer than 1,000 iterations was needed for this chain. Although the ACF shows some autocorrelation at small lags, the ESS improved as the number of iterations increased. Moreover, as shown in Sections \ref{sec:mse} and \ref{sec:cov}, using 5,000 iterations yielded smaller MSEs than GPMF and empirical coverage probabilities generally close to the nominal level. Therefore, considering the balance between computational cost and estimation accuracy, we regarded 5,000 iterations with 1,000 burn-in iterations as a reasonable setting for the repeated simulation experiments.

\subsection{Comparison between SIR and the Finite Approximation Method}\label{sec:fam}

We compare the Sampling/Importance Resampling (SIR) method and the finite approximation method based on the infinite mixture representation for sampling from the EBeta distribution in terms of computation time, MSE, and coverage probability. For the SIR method, the numbers of candidate samples drawn from the beta proposal distribution were set to $L=100,200,500,$ and $1000$. To avoid numerical underflow and overflow, the maximum log-importance weight was subtracted from all log-importance weights before exponentiation. For the finite approximation method, the truncation levels were set to $M=100$ and $200$.

As shown in Table \ref{tab:comparison}, there were no substantial differences in the MSEs of the model parameters between the SIR and finite approximation methods. The coverage probabilities for $S$ were also close to the nominal level for both methods. For $\xi$, the coverage probability obtained by the SIR method tended to improve as the number of candidate samples $L$ increased, but it was still 0.941 even when $L=1000$. In contrast, the finite approximation method yielded coverage probabilities of 0.945 and 0.948 for $M=100$ and $M=200$, respectively, which were closer to the nominal level. Moreover, the computation times for the finite approximation method were shorter than that for the SIR method with $L=1000$. These results indicate that the finite approximation method is more computationally efficient when a comparable level of accuracy in interval estimation for $\xi$ is required.

\begin{table}[tb]
\centering
\caption{Comparison between the SIR and finite approximation methods.}
\label{tab:comparison}
\begin{tabular}{lrrrrrrrr}
\toprule
Method & Parameter & Time (s) & MSE$(W)$ & MSE$(H)$ & MSE$(S)$ & MSE$(\xi)$ & CP$(S)$ & CP$(\xi)$ \\
\midrule
& $L=100$  & 3227.1 & 60.1 & 0.00254 & 1.84 & 0.00362 & 0.956 & 0.925 \\
SIR & $L=200$  & 3439.3 & 56.5 & 0.00253 & 1.76 & 0.00352 & 0.959 & 0.934 \\
& $L=500$  & 4092.2 & 59.2 & 0.00256 & 1.80 & 0.00356 & 0.958 & 0.939 \\
& $L=1000$ & 6202.0 & 57.8 & 0.00248 & 1.81 & 0.00353 & 0.957 & 0.941 \\
\midrule
Finite approximation & $M=100$ & 4090.4 & 61.0 & 0.00251 & 1.85 & 0.00351 & 0.958 & 0.945 \\
& $M=200$ & 4706.2 & 57.3 & 0.00249 & 1.81 & 0.00346 & 0.958 & 0.948 \\
\bottomrule
\end{tabular}

\vspace{2mm}
\raggedright
\footnotesize
* Run time was measured on a MacBook Pro equipped with an Apple M5 processor and 16 GB of memory.
\end{table}

\section{Real Data Example}\label{sec:example}

As an application of the proposed Bayesian GPMF to real data, we analyze gegenpressing counts obtained from open football data provided by StatsBomb. We construct an observed matrix $Y$ containing the numbers of gegenpressing events for 20 teams over 38 matches and apply both GPMF and Bayesian GPMF to this matrix.

\subsection{Data and Preprocessing}

Gegenpressing, also known as counter-pressing, is a football tactic in which a team applies immediate and intense pressure after losing possession in an attempt to regain the ball. For this analysis, we used the open event data provided by StatsBomb for all matches in the 2015--2016 English Premier League season, in which individual events during each match are recorded.

We defined a gegenpressing event as a Pressure event occurring within five seconds after a team lost possession. We counted the number of gegenpressing events for each of the 20 teams in each matchweek and constructed a $20\times38$ gegenpressing matrix $Y$. The rows of $Y$ correspond to the teams, and the columns correspond to the 38 matchweeks. By applying matrix factorization to this matrix, we aim to identify differences in pressing styles among teams and temporal patterns over the course of the season.

\subsection{Results}

We applied Bayesian GPMF to the gegenpressing matrix $Y$. To obtain more stable posterior estimates, we set the truncation level of the finite approximation to $M=500$ and ran the Gibbs sampler for 10,000 iterations, with the first 2,000 iterations discarded as burn-in. We set $\alpha_w=\beta_w=\alpha_h=\beta_h=\alpha_\xi=\beta_\xi=1$ and $\gamma_\xi=0$, so that the prior distribution for $\xi_i$ reduces to the uniform distribution on $(0,1)$.

Figure \ref{fig:w_bgpmf} shows the posterior means and 95\% credible intervals for the basis matrix $W$ and the overdispersion parameter $\xi$. Although the values of the first basis vary across teams, they are relatively large for many teams, suggesting that this basis represents a basic level of pressing intensity common across the league. In contrast, the second and third bases show greater variation across teams and take values close to zero for most teams. These bases can therefore be interpreted as representing specific tactical characteristics. The widths of the credible intervals also vary across teams, particularly for the second and third bases, demonstrating that Bayesian GPMF allows us to assess the uncertainty in the estimates for individual teams.

Figure \ref{fig:h_bgpmf} shows the posterior means and 95\% credible intervals for the coefficient matrix $H$. The coefficients of the first basis remain relatively stable throughout the season, which is consistent with its interpretation as representing a basic level of pressing intensity common across the league. The second basis shows a pronounced peak in matchweek 25, whereas the third basis shows pronounced peaks in several matchweeks during the first half of the season and in matchweek 27. This suggests that these bases represent tactical characteristics that appeared during particular periods of the season. In particular, the coefficients of the second basis are relatively large in matchweeks 20, 25, 28, and 29, which correspond to congested fixture periods. Thus, this basis may reflect whether teams were able to maintain their pressing intensity during such periods.

Regarding the overdispersion parameter $\xi_i$, the largest values were observed for Watford, followed by Swansea City, Chelsea, and Tottenham Hotspur. These relatively large values indicate greater match-to-match variation in the number of gegenpressing events for these teams and may reflect changes in their tactical behavior. In contrast, Leicester City had a particularly small value of $\xi_i$, indicating relatively little variation in the number of gegenpressing events throughout the season. These results suggest that $\xi_i$ may serve as an indicator of the stability or variability of a team's pressing intensity.

\begin{figure}[h]
	\includegraphics[width=16cm]{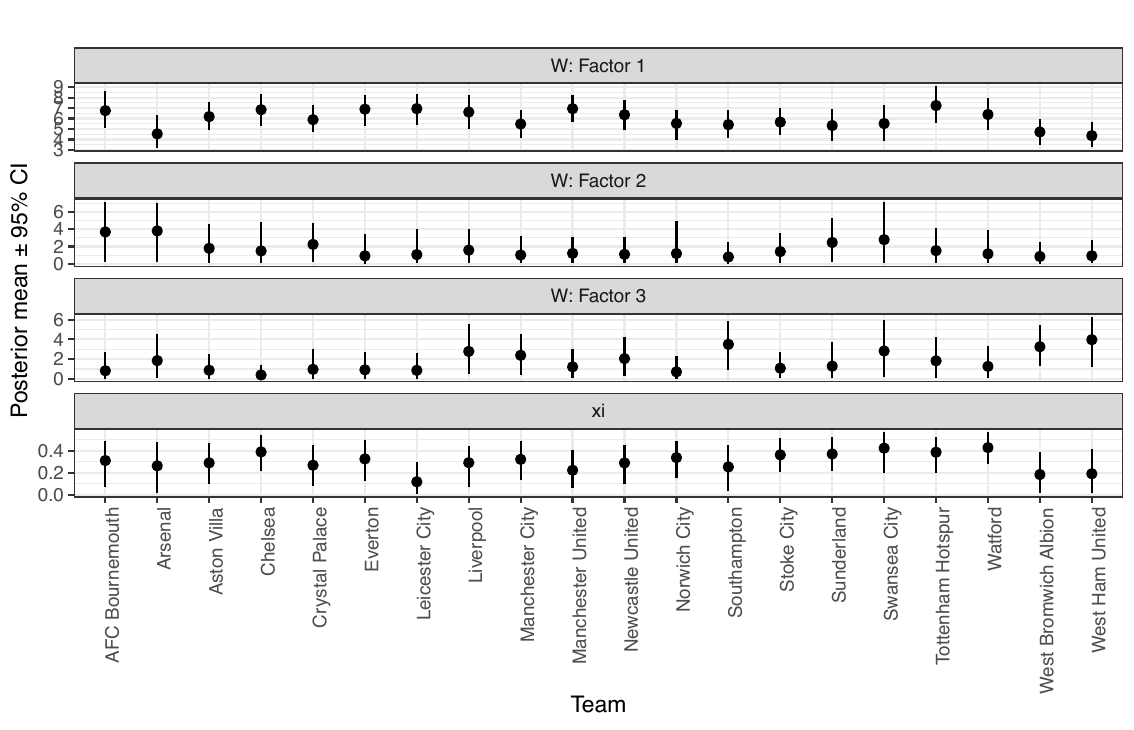}
	\caption{Posterior means and 95\% credible intervals for $W$ and $\boldsymbol{\xi}$ obtained by Bayesian GPMF.}
	\label{fig:w_bgpmf}
\end{figure}

\begin{figure}[h]
	\includegraphics[width=16cm]{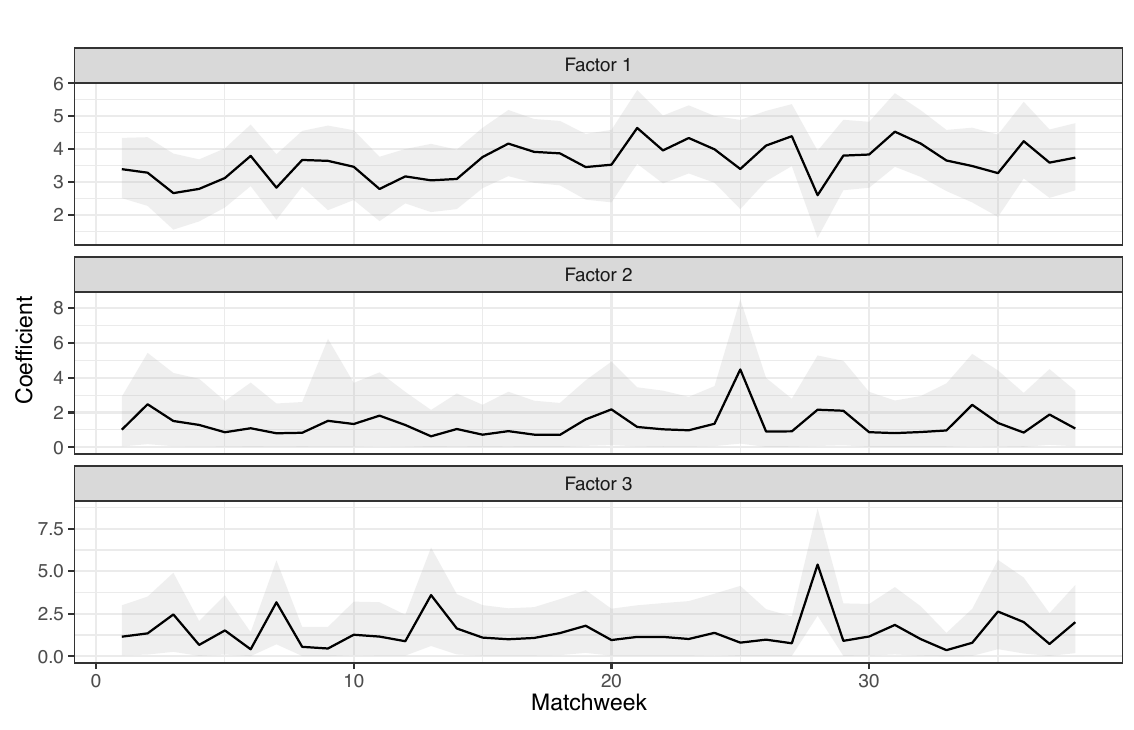}
	\caption{Posterior means and 95\% credible intervals for $H$ obtained by Bayesian GPMF.}
	\label{fig:h_bgpmf}
\end{figure}

\section{Conclusion}

In this paper, we proposed Bayesian GPMF, a Bayesian extension of generalized Poisson matrix factorization (GPMF). By introducing gamma and exponentially tilted beta distributions as prior distributions, we developed a Gibbs sampler for posterior inference.

For sampling from the exponentially tilted beta distribution in the Gibbs sampler, we also derived an infinite mixture representation of the EBeta distribution and employed a finite approximation based on this representation. Simulation experiments showed that the finite approximation achieved estimation accuracy comparable to Sampling/Importance Resampling (SIR), while providing greater computational efficiency.

The simulation experiments also showed that Bayesian GPMF yielded smaller mean squared errors than the conventional GPMF, while the empirical coverage probabilities of the credible intervals were generally close to the nominal level. Furthermore, the application to real data demonstrated the usefulness of Bayesian GPMF.

Future work includes selection of the number of factors and the development of faster inference algorithms for larger datasets.

\printcredits

\bibliographystyle{cas-model2-names}

\bibliography{sakaori-bibliography}



\end{document}